\newtheorem{thm}{Theorem}
\begin{document}
\title{Geometric Bell-like inequalities for steering}
\author{M. \. Zukowski}
\author{A.  Dutta}
\author{Z. Yin}
\affiliation{Institute of Theoretical Physics and Astrophysics, University of Gda\'{n}sk, 80-952 Gda\'{n}sk, Poland}

\begin{abstract}
Many of the  standard Bell inequalities (e.g., CHSH) are not effective for detection of quantum correlations which allow for steering, because for a wide range  of such correlations they are not violated. We present Bell-like inequalities which have lower bounds for non-steering correlations than for local causal models. The inequalities involve all possible measurement settings at each side. We arrive at interesting and elegant conditions for steerability of arbitrary two-qubit states.
\end{abstract}

\pacs{}
\maketitle

\newcommand{\bra}[1]{\langle #1\vert}
\newcommand{\ket}[1]{\vert #1\rangle}
\newcommand{\abs}[1]{\vert#1\vert}
\newcommand{\avg}[1]{\langle#1\rangle}
\newcommand{\braket}[2]{\langle{#1}|{#2}\rangle}
\newcommand{\commute}[2]{\left[{#1},{#2}\right]}
%------------------------------------------------------------
\section{Introduction}
%------------------------------------------------------------

The concept of steering was introduced by Schr\"{o}dinger ~\cite{ Schrodinger} in 1935.  Recently the issue  was revisited in many papers. Wiseman {\it et al}~\cite{{WJD2007}} show that quantum steering is a notion of non-classicality  which is placed  between entanglement and violation of Bell inequality. There are only few steering inequalities known. For example,  linear steering inequality \cite{SJWP2010,SNC2013}, steering inequality due to an ``All-Versus-Nothing" argument \cite{CYWSCKO2013} ,  steering inequality from uncertainty principle \cite{SBWCH2013}, and  fine-grained uncertainty relation \cite{pramanik}.  In this work we address the problem of relation of Bell inequalities with steering. We show that some of the standard inequalities, like the CHSH ones \cite{CHSH}, detect steering only when it is associated with violation of their bound for local realistic (local causal) theories. Thus they cannot detect steering property of states which allow steering but do not violate local realism (see e.g. Ref.  \cite{{WJD2007}}; such a class in non-empty, and as a matter of fact it forms a subset of entangled states of a considerable ``volume"). However, we  find here a class of Bell inequalities which are capable to detect steering property for states which do {\em not} violate their local causal/realistic bound. This is done by showing that for states which belong to a non-steering class one can derive a new bound for the expression forming the inequality, which is lower than the bound for the local causal theories.
As a result, we  introduce a steering inequality which is inspired by  geometric Bell inequalities considered in \cite{MZ, KZ, badziag}.

\subsection{Steering}
We give a short introduction to  quantum  steering. The main idea of steering can  be formulated in the following way. Imagine  two communicating  partners Alice and Bob who supposedly share an entangled state. 
If this is so, and the entangled state belongs to the class allowing steering,  Alice, when asked by Bob to perform a measurement of his choice, and to report her outcome, can produce a family of states on Bob's side, in a way which cannot be reproduced by her sending  him  particles (before he asks her to do the measurements), in various states, which are {\em not} entangled with a quantum system in her lab.
Note, that not all entangled states allow steering, \cite{{WJD2007}}, and that every state which violates a Bell inequality allows steering (this will be evident after reading the next subsections).

 Imagine Alice and Bob share an ensemble of identical quantum states. The subsystems of Alice will be denoted by $1$, while Bob's by $2$.  A state $\varrho^{(12)}$ of a pair of subsystems does {\em not} allow steering by Alice of Bob's states if  one can write the states of Bob as
 \begin{equation}
\label{steering1}
\varrho^{(2)}_{a|x}=\sum_\lambda{p_\lambda}P(a|x,\lambda)\varrho^{(2)}_{\lambda},
\end{equation}
where  $x$ is the  measurement (supposedly  performed) by Alice for which Bob asked, whereas $a$ her declared result. The parameter $\lambda$  is just an index, of whatever form, even a continuous variable, or set of them, which parametrizes the states sent by Alice. The symbols  $P(a|x,\lambda)$ and $p_\lambda$ stand for probabilities (one has $\sum_\lambda p_\lambda=1$), whereas $\varrho^{(2)}_{\lambda}$ is some state of the particle received by Bob form Alice  (often called ``hidden state").
Note that  $P(a|x, \lambda)$ can be treated as some realization of a hidden variable model, namely a probability distribution of  measurement outcomes $a$ under settings $x$ for a hidden variable value given by  $\lambda,$ with the probability of the hidden variables given by  $p_\lambda$.
The structure of the state in Eq.  (\ref{steering1})  allows one  to consider all kinds of hidden variable models on  Alice's  side, but in Bob's side  the description is restricted to  hidden local quantum states. 
If such a combination of ``local hidden variable (LHV) and local hidden state"  (LHS) models $\eqref{steering1}$ can be constructed, then the state is not steerable.
 If a  model of type $\eqref{steering1}$ for $\varrho^{(2)}_{a|x}$ does not exist for Bob's states, then Alice can steer his states.

%Say Alice prepares a quantum state and shares it with  Bob. She wants to convince Bob  that the shared  state is  entangled and she can steer Bob's state. However, to make sure that the state they share is steerable by Alice,  Bob needs to %know  measurement  outcomes  of observables  in  Alice's side,  for any random choice of measurement directions.

 Let's say Alice and Bob share a quantum state $\varrho^{(12)}$ which allows steering and  Alice's (generalized) measurement operators are $M^{(1)}_{a|x}$, where $a$ is an outcome for measurements defined by the setting  $x$. If outcome $a$ happens, the state of Bob's system spontaneously collapses to  $N \varrho^{(2)}_{a|x},$ where $N$ is a normalization factor, and
\begin{equation}
\label{steering}
\varrho^{(2)}_{a|x}=\text{tr}_{(1)}(\varrho^{(12)}M^{(1)}_{a|x}\otimes I^{(2)}).
\end{equation}
In other words,  measurements in Alice's side lead to a set of  states of Bob's systems  which can be modeled only by the quantum formula (\ref{steering}). 

To test if  Alice really can steer the state,  Bob may ask  her to make some (generalized) measurements of arbitrary kind of  his choice, and to report her outcomes. After many repetitions, when big enough ensembles are formed for given  measurements on Alice's side and specific results,  by a local state tomography Bob can check if the states are indeed  $\varrho^{(2)}_{a|x}$, and then check by some methods if they are obtainable only via steering by Alice. One of such methods will be suggested in this article.

\subsection{Bell inequalities vs steering}

Let us describe, from the point of view of mathematical formalism, the relation between  Bell inequalities and the problem of steering~\cite{WJD2007, B2014,  CYWSCKO2013, SBWCH2013, SJWP2010, SNC2013}.  This will be presented here for qubit systems, as our results will applicable to such systems, and for von Neumann type measurements. This allows us to parametrize the local settings by Bloch vectors.
\begin{itemize}
\item
Bell inequalities hold for  local  causal models, for which,  in the case of two particles (here for simplicity qubits), the correlations, in terms of joint probabilities of two pairs of local results $P(r_1, r_2|\vec{a}, \vec{b})$,  can be described by :
\begin{equation}
\label{bell}
P(r_1, r_2|\vec{a}, \vec{b})_{LHV}=\int p_\lambda P(r_1|\vec{a}, \lambda)P(r_2|\vec{b}, \lambda)d\lambda,
\end{equation}
where $r_1, r_2$ are local results, $\lambda$ is a cause (essentially, a  hidden variable) and $\vec{a}$ and $\vec{b}$ denote local settings of the measuring devices.  $P(r_1|\vec{a}, \lambda)$ and  $P(r_2|\vec{b}, \lambda)$ are some (classical) probabilities. 
\item 
The {\em steering} property can be defined as the {\em non-existence} of the following model of correlations
\begin{equation}
\label{steering}
P(r_1, r_2|\vec{a}, \vec{b})_{NS}=\int p_\lambda P(r_1|\vec{a}, \lambda)Tr[\hat{\pi}(r_2|\vec{b})\varrho^{(2)}_\lambda] d\lambda,
\end{equation}
where $\hat{\pi}(r_2|\vec{b})$ is the projection operator for an observable parametrized by the  setting $\vec{b}$, which is associated with the eigenvalue $r_2$(the result), and $\varrho^{(2)}_\lambda$ is some pure state of Bob's  system $2,$ parametrized by some set of parameters $\lambda.$  One can alternatively call correlations describable by (\ref{steering}) as allowing a hidden state description. Note that the model (\ref{steering}) is a special case of 
(\ref{bell}), in which the probability $P(r_2|\vec{b}, \lambda)$ is replaced by the quantum formula $Tr[\hat{\pi}(r_2|\vec{b})\varrho^{(2)}_\lambda] $.
\end{itemize}

In the  case of models (\ref{steering}) one does not put any restrictions on the form of $ P(r_1|\vec{a}, \lambda)$, other than the ones assumed for the case of (\ref{bell}), but on Bob's side projective measurements on quantum states are assumed. A two system state is non steerable when the local measurement statistics  can be described by a local hidden variable model on one side (Alice) and local hidden states on the other side (Bob)  $\eqref{steering}.$ 

{ { 
If we have a violation of Bell inequality, then definitely the quantum state giving the correlations allows steering. However,  in many cases we find that the given state does not violate well  known Bell inequalities and still allows steering. {This is because the non-steerable correlations can be thought of as described by a {\em restricted} local hidden model which on one side allows only probabilities which are allowed by quantum formalism applied to a single particle. Compare (\ref{bell}) and (\ref{steering}).  

 The famous CHSH inequality \cite{CHSH} fails to distinguish between models (\ref{bell}) and (\ref{steering}). If we try to re-derive  it for non-steering correlations, given by  $\eqref{steering}$, we get the following basic algebraic relation
\begin{equation}
\label{CHSH}
I_1(\vec{a}_1, \vec{\lambda})[\vec{b}_1\cdot \vec{\lambda}+\vec{b}_2\cdot \vec{\lambda}]+I_1(\vec{a}_2, \vec{\lambda})[\vec{b}_1\cdot \vec{\lambda}-\vec{b}_2\cdot \vec{\lambda}],
\end{equation}
where  from now on $\vec{\lambda}$ represents a Bloch vector of a single qubit state $\varrho^{(2)}_{\vec{\lambda}}$, $I_1(\vec{a}, \vec{\lambda})=P(+1|\vec{a}, \vec{\lambda})-P(-1|\vec{a}, \vec{\lambda}),$ etc., and finally 
\begin{equation}\label {QUANTUM}
\vec{b}\cdot\vec{\lambda}=\text{Tr}[\vec{b}\cdot\vec{\sigma}^{(2)}\varrho^{(2)}_{\vec{\lambda}}],
\end{equation}
 where in turn $\vec{\sigma}^{(2)}$ is Pauli vector, while $\vec{b}_i$ and $\vec{a}_j$ denote the  vectors defining the settings. Note that expressions (\ref{QUANTUM}) replace in (\ref{CHSH}) the usual $I_2(\vec{b}, \vec{\lambda})=P(+1|\vec{b}, \vec{\lambda})-P(-1|\vec{b}, \vec{\lambda})$ for  the case of local hidden variables. Unfortunately the maximum of this algebraic relation, $B_{NS}$, is 2, just as in the case of local hidden variables ($B_{LHV}=2$).  Thus,  if we replace in the algebraic expression of  CHSH the local causal models by  a model involving  quantum correlations without the steering property, what leads to  (\ref{CHSH}),  the upper bound of this expression $B_{NS}$, remains the same as for local causal theories, $B_{LHV}=B_{NS}$. Due to this fact,  CHSH inequalities cannot detect steering which does not violate local realism (all that for the specific CHSH scenario of two-settings per observer).

We shall show that,  if one uses non-standard Bell  inequalities  of Refs. ~\cite{MZ, KZ}, involving all possible settings on both sides of the testing experiment, their bound  $B_{NS}$ for {\em non-steering} quantum correlations (\ref{steering})  is lower than their bound $B_{LHV}$ for local causal correlations (\ref{bell}). 
This means the following: we shall have a functional  expression  $L[E]$ of values in real numbers, which depends linearly  on  correlation functions $E(\vec{a}, \vec{b})=\sum_{r_1,r_2=\pm1}r_1r_2P(r_1,r_2|\vec{a}, \vec{b})$. If the correlation functions $E_{LHV}$ allow  local hidden variable  models, (\ref{bell}), one can find that one always has
\begin{equation} 
L[E_{LHV}]\leq B_{LHV}. 
\end{equation}
The number $B_{LHV}$ is the bound of a Bell inequality based on the functional $L[E]$. The value of $B_{LHV}$, for the specific $L[E]$ to be  studied in the next Section, was derived in \cite{KZ}. 
One can also find in \cite{KZ} that this bound is violated by specific quantum predictions. Thus the inequalities are ``relevant".  We shall show here that, if one restricts the models to the ones of the non-steering class (\ref{steering}),  
one has  
\begin{equation} 
L[E_{NS}]\leq B_{NS}<B_{LHV}. 
\end{equation}
Thus, the new inequality, $L[E_{NS}]\leq B_{NS},$ can detect steering property in quantum correlations which {\em do not} violate  inequalities of the same kind  for the Bell problem. 

\subsection{Geometric approach leading to inequalities of Ref. \cite{KZ}}}

 The approach ~\cite{MZ, KZ, badziag} is  based on  the trivial fact that if one has two (real) vectors $v$ and $w,$ then,   if $(v, w)<||w||^2,$ one has  $v\neq w.$ In  simple words, if a scalar product of two vectors is less than the squared norm of one of them then they cannot be equal.

Following this idea for two qubit correlations,  we shall compare the quantum correlation function $E_Q(\vec{a}, \vec{b})$, which is always precisely defined for the given observables and the given quantum state (and thus is treated here a known function) with the generic correlation function for non-steerable states, of which it is only known that, by (\ref{steering}),  in general it must have the following structure
 $E_{NS}(\vec{a},\vec {b})=\int p_{\vec{\lambda}}I_A(\vec{a}, \vec{\lambda})\vec{b}\cdot\vec{\lambda}d\vec{\lambda}.$ Here $p_{\vec{\lambda}}$ severs the same role as $p_\lambda$ in Eq. (\ref{steering}). Note that if $\varrho_\lambda$ in (\ref{steering}) are replaced by qubit states, then they can indexed by their Bloch vectors, $\vec{\lambda}$.

The scalar product can be defined as
\begin{equation}
\label{starting}
(E_{NS},E_Q)=\int \int E_{NS}(\vec{a}, \vec{b})E_{Q}(\vec{a}, \vec{b})d\Omega(\vec{a})d\Omega(\vec{b})
\end{equation}
(the integrations are over the Bloch spheres). It can be shown that, when treating $E_Q$ as fixed, the scalar product has an upper bound, $B_{NS}$. That is for any  $E_{NS}(\vec{a}, \vec{b}).$  one has 
\begin{equation}\label{COND}
 L(E_{NS})=(E_{NS},E_Q)\leq B_{NS}.
\end{equation} 
Note that this has  a formal structure of a Bell (like) inequality with function $E_Q(\vec{a},\vec{b})$ giving a (fixed, known) continuous set of coefficients. We  show that, for a wide range of states which do not violate the CHSH inequality or the inequality based on the same geometric concepts but applied to local causal models \cite{KZ}, one has
\begin{equation}
\label{ineq}
\text{B}_{NS}<\int \int E^2_{Q}(\vec{a}, \vec{b})d\Omega(\vec{a})d\Omega(\vec{b})=||E_Q||^2=L(E_Q),
\end{equation}
which means that they do not  have a  hidden state description, i.e. they allow steering. From relation (\ref{COND}) can derive a concise sufficient condition, which allows us tell that a quantum state allow steering (see below).}}

\section{Derivation of the steering inequalities}
We shall first give definitions of notions which will be used in the derivation. Next we shall move to the construction of the Bell-like inequality which is a necessary condition for the non-steering case.

\subsection{Correlation function for two-qubit states}

Suppose Alice performs a projective measurement $\vec{ m} \cdot \vec{\sigma}^{(1)}$, where $\vec{\sigma}^{(1)}$ is her vector built out of Pauli matrices, and $\vec{m}$ represents the unit  Bloch vector defining her measurement direction. The possible results are of course $\pm1$. Similarly  Bob performs projective measurements of $\vec{ n} \cdot \vec{ \sigma}^{(2)}$. The quantum correlation function is given by
\begin{equation}
\label{eq:Qcorrelation}
E_{Q}(\vec{ m},\vec{n}) = Tr ( \vec{ m} \cdot \vec{\sigma}^{(1)} \otimes \vec{ n} \cdot \vec{ \sigma}^{(2)} \varrho^{(12)} ),
\end{equation}
where $\varrho^{(12)}$ represents the two-qubit state.

An arbitrary density operator for two qubits can be written as
\begin{equation}
\label{eq:2d_state}
\varrho^{(12)}=\frac{1}{4}\sum_{\mu_1, \mu_2=0}^{3}T_{\mu_1\mu_2}\sigma^{(1)}_{\mu_1} \otimes \sigma^{(2)}_{\mu_2},
\end{equation}
where $\sigma^{(k)}_{\mu_n}$ for $\mu_n=i=1,2,3$, are the Pauli matrices forming the aforementioned vectors for observers $k=1,2$,  and  $\sigma^{(k)}_0 = \openone$. The components $ T_{\mu_1\mu_2}$ are real and given by $T_{\mu_1\mu_2}=  \textrm{Tr} [\varrho^{(12)} (\sigma^{(1)}_{\mu_1} \otimes \sigma^{(2)}_{\mu_2})]$. By equation \eqref{eq:Qcorrelation}, the correlation function between measurement outcomes on Alice and Bob's side reads
\begin{equation}
\label{correlation}
E_Q(\vec{m}, \vec{n})=\sum_{i, j=1}^{3}T_{ij} m_{i} n_{j},
\end{equation}
where $m_i$, and $n_j$ are Cartesian components of the Bloch vectors defining measurement directions. The set $T_{ij}$ for $i,j=1,2,3$ forms what is often called the correlation tensor (or  matrix) of the state $\varrho^{(12)}.$

\subsection{Correlation function model for states which do not allow steering}

If there is a LHS model describing the correlation function (\ref{eq:Qcorrelation}), then it has the following structure
\begin{equation}\label{eq:Lcorrelation}
E_{NS}(\vec{m},\vec{n}) = \sum_{\lambda} p_\lambda I_1( \vec{m}, \lambda) Q(\vec {n}, \lambda),
\end{equation}
where $I_1(\vec{m}, \lambda) =  P(1|\vec{m},\lambda) -  P(-1|\vec {m},\lambda)$ and $Q(\vec {n}, \lambda) = Tr (\vec{ n} \cdot \vec{ \sigma}^{(2)} \varrho^{(2)}_{\lambda}).$  Obviously, as mixed states are probabilistic convex  combinations of pure states, without any loss of generality,   one can demand in  the construction of model  that all density operators $\varrho^{(2)}_{\lambda}$ represent pure states. This will be followed below.

\subsection{A geometric criterion for quantum steering}

 Our geometric criterion boils down to the following.
If the correlation function $E_{Q}$ for the given state $\varrho^{(12)}$, calculated according to quantum rules with formula (\ref{eq:Qcorrelation}), has the following property: for {\em any } $E_{NS}$ one has
\begin{equation}
\label{eq:geoineq}
|| E_{Q}||^2 >  (E_{Q}, E_{NS}),
\end{equation}
then $E_Q$ cannot be described by a local hidden state  model (\ref{steering}), i.e, $\varrho^{(12)}$ allows steering (from Alice to Bob). Of course to establish such a fact one must find  $B_{NS}$ which is given
by 
\begin{equation}
B_{NS}=Max_{E_{NS}} (E_{Q}, E_{NS}).
\end{equation}
In this way we get an effective (state dependent) steering inequality
\begin{equation}\label{NS-INEQ}
 (E_{Q}, E_{NS})\leq B_{NS},
\end{equation}
which when {\em violated} points that the given state allows steering. This means that, if one replaces in it $E_{NS}$ by $E_Q$, and the resulting value $(E_Q,E_Q)$ exceeds
the bound $B_{NS}$, the correlations $E_Q$ allow steering. Note further, that one can also  treat $E_Q$ in (\ref{NS-INEQ}) as just a continuous set of coefficients in a linear (functional) inequality, and test some different quantum correlation function $E_q$, defined for some quantum state $\varrho^{(12)}_q$, also by replacing by it in the inequality the non-steering functions $E_{NS}$. This would give $(E_Q,E_q)$, and if one has  $(E_Q,E_q)> B_{NS}$, that is the inequality is violated, then correlations described by $E_q$ also allow steering.

One can choose among infinitely many definitions of the scalar product. Here we   consider the simplest one.
Namely,  if we denote the Bloch spheres of local measurement setting of Alice (and Bob) by $\Omega$,  the natural  Hilbert space in which we consider  our correlation functions  is the real space $L^2(\Omega \times \Omega), $ with a scalar product for two functions $f(\vec{m}, \vec{n})$ and
$g(\vec{m}, \vec{n})$ given by
\begin{equation}
(f,g)=\int_{\Omega \times \Omega}  f(\vec{m}, \vec{n})g(\vec{m}, \vec{n})d\Omega(\vec{m}) d\Omega(\vec{n}),
\end{equation}
where this integration measures are the usual (rotationally invariant, ``Haar") measures on the spheres (in simple words, integrations over a solid angle).

 We have following theorem:

\begin{thm}\label{thm:geo}
 If a state endowed with correlation function $E_Q(\vec{m}, \vec{n})=\sum_{i, j=1}^{3}T_{ij} m_{i} n_{j}$ is non-steerable, then
\begin{equation}
Max_{\vec{m}, \vec{n}}E(\vec{m}, \vec{n})\geq\frac{2}{3}\sum_{i, j=1}^{3}T^2_{ij}.
\end{equation}
\end{thm}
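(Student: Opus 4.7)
The plan is to start from the assumption that $E_Q$ is non-steerable, which by the model (\ref{eq:Lcorrelation}) means we may write $E_Q(\vec{m},\vec{n})=\sum_\lambda p_\lambda I_1(\vec{m},\lambda)\,\vec{n}\!\cdot\!\vec{\lambda}$, where $\vec{\lambda}$ is the (unit) Bloch vector of the pure hidden state $\varrho^{(2)}_\lambda$ and $|I_1(\vec{m},\lambda)|\le 1$. Since $E_Q=E_{NS}$ in this case, we have the identity $\|E_Q\|^2=(E_Q,E_{NS})$, and the whole argument reduces to computing one side and bounding the other.

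First I would compute the left-hand side exactly. Using the standard rotationally-invariant averages over the Bloch sphere, $\int m_i m_k\,d\Omega(\vec{m})\propto\delta_{ik}$ and $\int n_j n_l\,d\Omega(\vec{n})\propto\delta_{jl}$, one obtains
\begin{equation}
\|E_Q\|^2=C_1\sum_{i,j=1}^{3}T_{ij}^2,
\end{equation}
for an explicit constant $C_1$ depending only on the normalization of the measure on $\Omega$.

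Next I would bound the right-hand side. Performing the integral over $\vec{n}$ first kills the $\vec{n}$-dependence of $\vec{n}\!\cdot\!\vec{\lambda}$ against the $n_j$ in $E_Q$ via the same orthogonality, giving
\begin{equation}
\int E_Q(\vec{m},\vec{n})\,\vec{n}\!\cdot\!\vec{\lambda}\,d\Omega(\vec{n})=C_2\,\vec{m}^{T}T\vec{\lambda}.
\end{equation}
Using $|I_1|\le 1$ and then evaluating $\int|\cos\theta|\,d\Omega$ between $\vec{m}$ and the fixed direction of $T\vec{\lambda}$ (a standard elementary integral equal to half the total measure of $\Omega$), I would obtain
\begin{equation}
(E_Q,E_{NS})\le C_3\sum_\lambda p_\lambda\,|T\vec{\lambda}|.
\end{equation}
Finally, $|T\vec{\lambda}|\le\sigma_{\max}(T)=\max_{\vec{m},\vec{n}}\vec{m}^{T}T\vec{n}=\max_{\vec{m},\vec{n}}E_Q(\vec{m},\vec{n})$ because the $\vec{\lambda}$'s are unit vectors, so $(E_Q,E_{NS})\le C_3\,\max E_Q$.

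Combining the two expressions via $\|E_Q\|^2=(E_Q,E_{NS})$ gives $C_1\sum T_{ij}^2\le C_3\max E_Q$. A direct bookkeeping of the constants (whose ratio $C_3/C_1$ is independent of the normalization of the measure on $\Omega$) will yield $C_3/C_1=3/2$, which delivers the claimed bound $\max E\ge\tfrac{2}{3}\sum T_{ij}^2$. The only nontrivial step is the chain of two bounds on $(E_Q,E_{NS})$ — replacing $I_1$ by its absolute bound and $|T\vec{\lambda}|$ by the operator norm of $T$ — everything else is a routine evaluation of Haar averages on $S^2$; I expect the cleanest presentation to package those averages into one or two identities and keep the main chain of inequalities very short.
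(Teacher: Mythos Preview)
Your proposal is correct and follows essentially the same route as the paper: integrate out $\vec{n}$ via the orthogonality relation $\int n_k n_l\,d\Omega=\tfrac{4\pi}{3}\delta_{kl}$, bound the remaining $\vec{m}$-integral using $|I_1|\le 1$ together with $\int|\cos\theta|\,d\Omega=2\pi$, cap $|T\vec{\lambda}|$ by the operator norm $T_1=\max_{\vec{m},\vec{n}}E_Q$, and compare against $\|E_Q\|^2=\tfrac{16\pi^2}{9}\sum T_{ij}^2$. The paper dresses the middle step in more elaborate language---an explicit SVD change of variables and a projection of $I_A$ onto the three-dimensional span of the $m_i$---but the underlying estimates and the resulting constants ($C_1=16\pi^2/9$, $C_3=8\pi^2/3$, ratio $3/2$) are identical to yours.
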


\begin{proof}
Let us use spherical coordinates to express the vectors, e.g. for $\vec{n}$ we put  $(\sin{\theta_n}\cos{\phi_n}, \sin{\theta_n}\sin{\phi_n}, \cos{\theta_n}).$ Then according to equation \eqref{eq:Lcorrelation} and \eqref{correlation}, the right hand of \eqref{eq:geoineq} is
\begin{equation}
\label{maximum}
\sum_{\vec{\lambda}} p_{\vec{\lambda}} \int_{\Omega \times \Omega} I_A (\vec{m}, \vec{\lambda}) (\vec{m} \cdot T \vec{n}) ( \vec{n}\cdot \vec{\lambda}) d\Omega(\vec{m}) d\Omega(\vec{n}),
\end{equation}
where the vector $\vec{\lambda}$ is the Bloch vector corresponding to a pure state \cite{reason} and $d\Omega(\vec{n}) $ is a rotationally invariant measure on Bloch sphere of unit radius, namely $\sin\theta_n d\theta_n d\phi_n.$ The  functions $ n_k(\theta_n, \phi_n)$ which give the components $k=1,2,3$ of $\vec{n}$ in spherical coordinates  obey the following orthogonality relation
\begin{eqnarray}
\label{orthogonality}
\int_\Omega n_k(\theta_n, \phi_n)n_l(\theta_n, \phi_n)d\Omega(\vec{n})=\frac{4\pi}{3}\delta_{kl}.
\end{eqnarray}
Thus,
\begin{equation}\label{eq:intofvector}
\int_{\Omega} (\vec{m} \cdot T \vec{n}) (\vec{n} \cdot \vec{\lambda}) d\Omega(\vec{n })= \frac{4\pi}{3}( \vec{m} \cdot T \vec{\lambda}),
\end{equation}
and  \eqref{maximum} can be written as
\begin{equation}\label{eq:maximal}
\frac{4\pi}{3} \sum_{\vec{\lambda}} p_{\vec{\lambda}} \int_{\Omega} I_A (\vec{m},\lambda ) ( \vec{m} \cdot T \vec{\lambda}) d\Omega(\vec{m}).
\end{equation}

$T$ is a $3\times 3$ real matrix. Let $\Sigma$ be its  singular value decomposition (i.e. Schmidt decomposition). One has $T = U^T \Sigma V,$ where $U$ and $V$ are specific  $3\times 3$ orthogonal matrices representing rotations (proper and improper) of the local coordinates of Alice and Bob, respectively.  The matrix $\Sigma$ is  diagonal,  with diagonal entries which we denote as $(T_1,T_2,T_3),$ and customarily one assumes that  $T_1 \geq T_2 \geq T_3 \geq 0.$ Thus in matrix notation, with Bloch vectors treated as column matrices 
\begin{equation}
\vec{m} \cdot T \vec{\lambda} = (U \vec{m} )^T\Sigma V \vec{\lambda}.
\end{equation}
It is important to remember that $T_1=Max_{\vec{m}, \vec{n}}\vec{m} \cdot T \vec{n}$.

By changing integration variable to  $\vec{m}' = U\vec{m}$ and putting $\vec{\lambda}' = V \vec{\lambda},$ since $d\Omega(\vec{m}) $ is a rotationally invariant measure, we have:
\begin{equation}\label{eq:maxima2}
( E_{Q},E_{NS}) = \frac{4\pi}{3}  \sum_{\vec{\lambda}} p_{\vec{\lambda}} \int_{\Omega} I_A (\vec{m},\vec{\lambda}) ( \vec{m} \cdot \Sigma \vec{\lambda}) d\Omega(\vec{m}),
\end{equation}
where we have dropped the primes.

To estimate the above we can first  calculate the maximum length of the projection of  $I_A(\vec{m}, \vec{\lambda})$ on a three dimensional subspace of $L^2(\Omega),$ spanned by the components of $\vec{m}$, that is by functions $m_1=\sin\theta_m\cos\phi_m$, $m_2=\sin\theta_m\sin\phi_m$, and $m_3=\cos\theta_m$. We shall denote this subspace as $S^{(3)}$.

As any vector $\vec{v}$ is equal to its length (norm) $||\vec{v}||$ times the unit
directional vector associated with it, $\vec{v}/||\vec{v}||$, the projection of
$I_A(\vec{m}, \vec{\lambda})$, denoted here as $I_A^{||}(\vec{m}, \vec{\lambda})$ can be always put in the
following form
\begin{equation}
I_A^{||}(\vec{m}, \vec{\lambda})=||I_A^{||}(
\vec{\lambda})||{u}(\vec{m},\vec{ \lambda}),
\end{equation}
where
${u}(\vec{m}, \vec{\lambda})$ is a unit vector (normalized function) in $S^{(3)}$,  and  $||I_A^{||}(
\vec{\lambda})||$ is the norm of the projection of $I_A$ into  $S^{(3)}$. The maximal possible value of the norm  in turn can be  given by
\begin{equation}
M= Max_{I_A, u(\vec{\lambda})}( I_A, u (\vec{\lambda})), \label{PROJECTION}
\end{equation}
where the scalar product in $L^2(\Omega)$ is defined
$$
( I_A, u (\vec{\lambda}))= \int_{\Omega}I_A(\vec{m},\vec{ \lambda}){u}(\vec{m},\vec{ \lambda})d\Omega(\vec{m}).
$$
Any $u(\vec{\lambda})$ can be put as
$$\sqrt{\frac{3}{4\pi}}(\sin\alpha_{\vec{\lambda}}\cos\beta_{\vec{\lambda}} m_1
+\sin\alpha_{\vec{\lambda}}\sin\beta_{\vec{\lambda}}  m_2+ \cos\beta_{\vec{\lambda}} m_3)
$$
where $\alpha_{\vec{\lambda}}$ and $\beta_{\vec{\lambda}}$ are some angles defining a unit 3 dimensional  vector $\vec{w}_{\vec{\lambda}}$ in $R^3$, given by 
$$(\sin\alpha_{\vec{\lambda}}\cos\beta_{\vec{\lambda}},
\sin\alpha_{\vec{\lambda}}\sin\beta_{\vec{\lambda}},\cos\beta_{\vec{\lambda}}).
$$
 Therefore we have
 $u(\vec{m}, \vec{\lambda})=\sqrt{\frac{3}{4\pi}} \vec{w}_{\vec{\lambda}}\cdot \vec{m}.$ Thus 
the value  of (\ref{PROJECTION}) is given by maximum possible value of
\begin{equation}\label{INT}
\sqrt{\frac{3}{4\pi}} \int_\Omega I_A(\vec{m}, \vec{\lambda}) \vec{w}_{\vec{\lambda}}\cdot \vec{m}  d\Omega(\vec{m}).
\end{equation}
Note that   $-1\leq I_A(\vec{m}, \vec{\lambda})\leq1.$  When estimating the integral integral it optimal to   the $\hat{z}$ direction of the spherical coordinates as  $\hat{z}=\vec{w}_{\vec{\lambda}}$. With these two properties in mind one sees that the value of integral in expression (\ref{INT})  is maximally  $\int _{\Omega}|\cos\theta_m| d\Omega(\vec{m})=2\pi$.  Thus we  have $M=\sqrt{3\pi}=Max ||I^{||}_A(\vec{\lambda})||$.

  Now inserting $I_A^{||}(\vec{m}, \vec{\lambda})$,  instead of $I_A^{}(\vec{m}, \vec{\lambda})$  into \eqref{eq:maxima2} and next replacing it by  $M \sqrt{\frac{3}{4\pi}}\vec{w}_{\vec{\lambda}}\cdot \vec{m} $
one can estimate the integral in \eqref{eq:maxima2} as bounded by

\begin{equation}
\sqrt{3\pi}\sqrt{\frac{4\pi}{3}}Max_{\vec{\lambda}, \vec{w}_{\vec{\lambda}}}
\int_{\Omega} (\vec{w}_\lambda\cdot \vec{m} ) (\vec{m} \cdot \Sigma \vec{\lambda}) d\Omega(\vec{m}),
\end{equation}
which, using the same integration methods as before to reach (\ref{eq:intofvector}) can be simplified to
\begin{equation}
{2\pi}\left(\frac{4\pi}{3}\right)Max_{\vec{\lambda}, \vec{w}_{\vec{\lambda}}}
 \vec{w}_{\vec{\lambda}} \cdot \Sigma \vec{\lambda}= \frac{8\pi^2}{3} T_1.
\end{equation}
This ends our estimate of the maximal value of  $( E_Q, E_{NS})$.

We have a {\em Bell-like inequality}
\begin{equation}\label{FINAL}
(E_Q, E_{NS})\leq  \frac{8\pi^2}{3} T_1.
\end{equation}
It is linear with respect to $E_{NS}$, and is defined by a continuous set of coefficients given by the values of the correlation function $E_{Q}$. Note that it has an interesting feature, that if we replace $E_{NS}$ by $E_Q$ it becomes non-linear.

The final step is to calculate $||E_{Q}||^2=( E_Q, E_{Q}).$
Using orthogonality relation $\eqref{orthogonality}$ we obtain
\begin{equation}
\label{tensorsq}
(E_{Q},E_{Q}) = \frac{16\pi^2}{9}\sum_{j, k=1}^3 T^2_{jk}.
\end{equation}
Hence, the state is steerable if $T_1 <\frac{2}{3}\|T\|^2$, where $\|T\|^2=\sum_{j, k=1}^3 T^2_{jk}.$
\end{proof}

Note that similar condition for non existence of  local causal models, which was shown in \cite{KZ}, reads $T_1<\left (\frac{2}{3}\right )^2||T||^2,$ because $|( E_Q,E_{LHV})|\leq(2\pi)^2T_1,$ where $E_{LHV}$ is a correlation function modelable by local hidden variables. Note that local hidden states give a bound to the scalar product which is by a factor of $\frac{2}{3}$ lower than in the case of local causality. Note further, that a condition, of a  similar type,  for entanglement, which employs the same scalar product, reads $T_1 <\|T\|^2,$ see \cite{badziag}. Thus each step towards the  condition for entanglement changes the coefficient in front of $\|T\|^2$ by $\frac{3}{2}$.

\section{Examples}

\subsection{Werner states}
The  mixtures of the singlet $\left|\psi^{-}\right\rangle= \frac{1}{\sqrt{2}}(|01\rangle-|10\rangle)$ with white noise of the form:
\begin{equation}\label{werner}
\varrho_v^w= v |\psi^-\rangle\langle \psi^-| + (1-v)\frac{\openone}{4}.
\end{equation}
are known to be entangled if and only if $v>\frac{1}{3}$, and they violate the CHSH inequality when $v> \frac{1}{\sqrt{2}}.$ The condition for their steerability  has been shown in Ref. \cite{WJD2007}  and reads $v>\frac{1}{2}.$  Here we can use Theorem \ref{thm:geo} to  easily detect the steerability of $\varrho_v^w.$ Simply the correlation tensor $T$ of $\varrho_v^w$, in its Schmidt form\cite{reason2},
  is a diagonal matrix with entries $-v(1,1,1),$ and $T_1 = v$. So a simple algebra  recovers the threshold of $v=\frac{1}{2}$.

\subsection{Pure non-maximally entangled state with noise}
Consider a general pure state $|\psi_\alpha\rangle $ in Schmidt decomposition $\cos\frac{\alpha}{2}|00\rangle+\sin\frac{\alpha}{2}|11\rangle$, with $0\leq \alpha \leq \pi$. To get a direct comparison with the singlet studied above, let us transform it into:  $\sin\frac{\alpha}{2}|01\rangle-\cos\frac{\alpha}{2}|10\rangle$. The correlation tensor is then diagonal with
$T_{33}=-1$, and $T_{22}=T_{11}=-\sin{\alpha}$. If one considers "white noise" admixtures like (\ref{werner}),  with $|\psi^-\rangle$ replaced by  $|\psi_\alpha\rangle$, then $T_1=v$ and $||T||^2=v^2(1+2\sin^2\alpha)$. Therefore the sufficient condition for steering reads
\begin{equation}
v\geq \frac{3}{2(1+2\sin^2\alpha)}.
\end{equation}
Thus the noisy generalized Werner state allows for steering for $\alpha >\frac{\pi}{6}$.

\section{Final remarks}
Note that by using a weight function in the definition of the scalar product and thus also of the norm, one can derive an infinite number of conditions of a similar kind. To be more specific, one can always use a weight $w(\vec{m},\vec{n})$, to redefine the scalar product 
\begin{equation}
( f,g)_w=\int_{\Omega \times \Omega}  f(\vec{m}, \vec{n})g(\vec{m}, \vec{n})w(\vec{m},\vec{n})d\Omega(\vec{m}) d\Omega(\vec{n}).
\end{equation}
The weight must give for any $f$ 
\begin{equation}
(f,f)_w=\int_{\Omega \times \Omega}  f(\vec{m}, \vec{n})f(\vec{m}, \vec{n})w(\vec{m},\vec{n})d\Omega(\vec{m}) d\Omega(\vec{n})\geq 0.
\end{equation}
In this way one can get an infinite family of conditions, much like the conditions for entanglement of Ref. \cite{badziag}. By finding an optimal $w$ for a {\em given} state one can get steering conditions which can be  better than the ones presented here. A further generalization of such an approach may additionally involve single particle measurements, e.g. the averages
\begin{equation}\label{SINGLE}
E_{Q}(\vec{ m}) = Tr \big( \vec{ m} \cdot \vec{\sigma} \otimes  \openone \cdot \vec{ \sigma} \varrho^{(12)}\big).
\end{equation}
This will be studied in a forthcoming publication.

Finally, we would like to mention that conditions, of a different kind,  for steering, which employ the correlation tensor,  were independently derived in Ref. \cite{WISEMAN}.

\section{Acknowledgments}
The work is a part of  Polish Ministry of Science and
Higher Education Grant no. IdP2011 000361.  MZ and ZY  acknowledge support  by TEAM project of FNP and ERC AdG grant QOLAPS. A.D. is supported within the International Ph.D. Project “Physics
of future quantum-based information technologies,” Grant
MPD/2009-3/4 of the Foundation for Polish Science (FNP).

\end{document}